\newtheorem{theorem}{Theorem}[section]
\definecolor{xxxcolor}{rgb}{0.8,0,0}
\DeclareExpandableDocumentCommand{\myrepeat}{O{}mm}
 {
  \int_compare:nT { #2 > 0 }
   {
    #3 \prg_replicate:nn { #2 - 1 } { #1#3 }
   }
 }
\newcommand{\ccP}{\textrm{\textsc{P}}}
\newcommand{\ccNP}{\textrm{\textsc{NP}}}
\title{Netrunner Mate-in-1 or -2 is Weakly \ccNP-Hard}
\author{
  Jeffrey Bosboom\\
  MIT CSAIL\\[-.25em]
  \small{\url{jbosboom@csail.mit.edu}}
  \and
  Michael Hoffmann\\
  ETH Zurich\\[-.25em]
  \small{\url{hoffmann@inf.ethz.ch}}
}
\date{}
\begin{document}

\maketitle

\begin{abstract}
We prove that deciding whether the Runner can win this turn (mate-in-1) in the Netrunner card game generalized to allow decks to contain an arbitrary number of copies of a card is weakly \ccNP-hard.  We also prove that deciding whether the Corp can win within two turns (mate-in-2) in this generalized Netrunner is weakly \ccNP-hard.
\end{abstract}

\section{Introduction}

Netrunner is a Gibsonian-cyberpunk-themed asymmetric collectible card game for two players, the Runner and the Corporation.
Both sides can play cards that either provide a one-shot effect or stay in play to provide an ongoing effect or available resource. The Corporation attempts to advance agendas stored in servers protected by ice (Intrusion Countermeasures Electronics), while the Runner makes runs on those servers to steal the agendas. 
There are two versions of the game: the original 1996 version\footnote{see \href{https://www.boardgamegeek.com/boardgame/1301/netrunner}{the corresponding BoardGameGeek page} for information, card list at \url{http://emergencyshutdown.net/webminster}}
and a 2012 reboot named \emph{Android: Netrunner}\footnote{Official website (including rules/FAQ download) at \url{https://www.fantasyflightgames.com/en/products/android-netrunner-the-card-game/}, card list at \url{https://netrunnerdb.com/en/search}}.  The rules of the games are broadly similar and many cards from the original game have been reprinted in the new game (sometimes under new names or with different costs).  We study \emph{Android: Netrunner} in this paper, proving that the mate-in-1 problem for the Runner (deciding whether the Runner can win this turn) and the mate-in-2 problem for the Corp are weakly \ccNP-hard when the deckbuilding rules are relaxed to remove the limit of three copies of a card per deck.

\section{Related Work}

For most two-player games, deciding which player will win from a given board state (generalized to arbitrary board sizes) under optimal play is PSPACE- or EXPTIME-complete, depending on whether the game prohibits or allows loops.  (Appendix A of \cite{GPC} lists some examples, among many results on games and puzzles.)  For only a few games is the mate-in-1 problem (deciding if the current player can win with the next move) interesting to study, because most games have only polynomially many legal moves from each position and whether the player won with a move is easy to verify.  We are aware of two nontrivial mate-in-1 results: Conway's Phutball is an example of a game for which mate-in-1 is \ccNP-complete~\cite{Phutball}, while mate-in-1 for Checkers is in \ccP~\cite{Checkers} (but not obviously so).

Four-player Magic:~the Gathering in which all players always take ``may'' actions when possible is Turing-complete \cite{MtGTuring}.  The game ends if and only if the simulated Turing machine halts, but the previous player wins, so this proof is a lose-in-1 result, albeit with a turn of possibly unbounded length.

\section{Netrunner}

Netrunner restricts both players to use no more than three copies of each card and does not provide for any card substitutes (like Magic's tokens) to be created during play. While there are cards that can accumulate counters during play, with only a finite number of cards available it does not seem possible to create arbitrarily large problem instances. Thus we relax the deckbuilding constraints to allow any number of copies of a card to be in a deck. We call the resulting game \emph{card-copy-generalized Netrunner}.

In the following proofs, we assume the reader is familiar with the Netrunner rules, though the concept of the reduction should be clear to all.  For convenience, the text of the cards used in the reduction is provided in Appendix~\ref{sec:netrunner-cards1} and \ref{sec:netrunner-cards2}.

\begin{theorem}
Deciding whether the Runner can win this turn in card-copy-gener\-alized Netrunner is weakly \ccNP-hard.
\end{theorem}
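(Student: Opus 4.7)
My plan is to reduce from Subset Sum: given positive integers $a_1,\dots,a_n$ and a target $T$, decide whether some subset of the $a_i$'s sums to exactly $T$. Weak \ccNP-hardness allows encoding the $a_i$'s in binary, and in card-copy-generalized Netrunner those values can be represented by Runner or Corp credit pools, or by stacks of cheap ice and other cards whose multiplicities appear in the instance description in binary notation.

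The construction would consist of $n$ independent \emph{choice} gadgets plus a single \emph{checkpoint} gadget. The $i$-th choice gadget is a remote server whose ice has total break cost $a_i$, built from many copies of a small, uniform subroutine card. The Runner spends exactly one click and exactly $a_i$ credits to run it, or skips it to save the credits; this binary choice corresponds to including or excluding $a_i$ from the subset. Each choice gadget yields a small fixed reward (e.g., a zero-point agenda), so the Runner has no reason to run it except to adjust their credit total to the needed value.

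The checkpoint is a final remote server holding an agenda worth all seven points required to win. Its ice would be tuned so that it can be broken iff the Runner's remaining credits equal $S - T$, where $S = \sum_i a_i$. I would give the Runner an initial credit pool of $S$ plus whatever fixed buffer is needed to actually pay through the checkpoint ice, so that on the final click of the turn the checkpoint is accessible iff the credits spent during the choice phase totalled exactly $T$, i.e., iff some subset of the $a_i$'s sums to $T$. With $n+1$ servers and $O(1)$ clicks per server, a single turn suffices; the only large quantities in the instance are credit totals, which are binary-encoded, giving a \emph{weak} reduction.

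The main obstacle is preventing unintended winning lines arising from Netrunner's many general-purpose mechanics: basic click-for-credit actions, economy events, tutoring, run events like The Maker's Eye or Account Siphon, milling R\&D, or runs on HQ and Archives accessing alternative agendas. I would neutralize these by stuffing HQ, R\&D, and Archives with non-agenda filler, restricting the Runner's hand, rig, and deck so no credit-generation, bypass, or additional-access cards are available, and by choosing the checkpoint ice so that it cannot be evaded by bypass effects. Once these side channels are closed, the intended equivalence---Runner wins this turn iff some subset of the $a_i$'s sums to $T$---follows, establishing weak \ccNP-hardness.
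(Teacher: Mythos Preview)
Your sketch has two genuine gaps, and both stem from constraints of the actual Netrunner rules that your outline glosses over.

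First, the checkpoint. You require a server whose ice ``can be broken iff the Runner's remaining credits equal $S-T$.'' Netrunner break costs are lower bounds, not equalities: with any installed fracter, if the Runner can afford to break a given pile of \textbf{barrier} ice with $k$ credits, they can also break it with $k'>k$ credits (they simply end the run with more left over). No card in the pool punishes the Runner for having \emph{too many} credits on a run. So as written, the Runner's best line is to skip every choice gadget---you yourself note the gadgets offer no reward---arrive at the checkpoint with the full $S$ credits plus buffer, and break it regardless of whether the Subset Sum instance is a yes-instance. The reduction collapses. The paper avoids this by reducing from \textsc{2-Partition} and forcing the Runner to run \emph{two} servers whose ice is drawn from a fixed common pool: any credit slack on one server is exactly a deficit on the other, converting the one-sided ``at least'' semantics of break costs into the two-sided ``exactly'' constraint you need.

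Second, the click budget. You write ``with $n+1$ servers and $O(1)$ clicks per server, a single turn suffices,'' but the Runner has exactly four clicks per turn in the rules being generalized (only the copies-per-card limit is relaxed), and you cannot remove the basic click-for-credit action either. Making $n$ binary choices via $n$ optional runs therefore does not fit in one turn once $n>3$, and any spare click becomes a free extra credit that further breaks the exact-sum logic above. The paper's construction packs the entire $n$-way decision into a \emph{single} click by having the Runner play Escher: the ice rearrangement encodes the whole partition at once, and the remaining three clicks are consumed by three forced runs (two on HQ because of Strongbox, one on the agenda remote), leaving no clicks for credit gain. That compression---one action whose resolution encodes an $n$-bit choice---is the key idea your proposal is missing.
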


\begin{proof}
  We reduce from \textsc{2-Partition}~\cite{GareyJohnson}: Given a multiset $A$ of positive integers, is there a partition $A=A_1\uplus A_2$ such that $|A_1| = |A_2|$ and the sum of the integers in $A_1$ is equal to the sum of the integers in $A_2$?

 Consider an input instance $A=\{a_1,\ldots,a_n\}$ and let $t = \sum_{i=1}^{n} a_i / 2$ denote the target sum.  We represent each $a_i$ by an Ice Wall with strength $3(a_i - 1) + 1$, which costs $2a_i$ to break with Aurora.  The Runner solves the \textsc{2-Partition} instance by rearranging the ice with Escher, then makes runs that verify the solution.  The rest of the infrastructure exists to ensure the Runner cannot win this turn without solving the \textsc{2-Partition} instance.


\textbf{Corp board state (Figure~\ref{fig:netrunner-corp-board}):} The Corp's identity is Weyland Consortium: Building a Better World\footnote{We choose these Runner and Corp identities because their abilities do not affect play on the turn being analyzed.  Because we are not generalizing influence, the Corp must play a Weyland identity to include an arbitrary number of Ice Walls in their deck.}.  R\&D is protected by $|A| + 3$ Enigmas and one of the Ice Walls in the outermost position; the contents of R\&D are irrelevant because we will show the Runner cannot access them.  There is one card stored in HQ, Priority Requisition, and two Strongboxes installed in the root of HQ.  HQ is protected by $(|A|-2)/2$ of the Ice Walls and an Archer in the outermost position.  Archives does not contain any agendas (all cards are face up), but it does contain a copy of Fast Track that was played last turn.  Archives is not protected by any ice.  The Corp has two remote servers.  One remote server has an installed Dedicated Response Team protected by $|A| + 3$ Enigmas and one of the Ice Walls in the outermost position.  The other remote server has an installed Priority Requisition (exposed last turn when the Runner played Infiltration) and K.~P.~Lynn, protected by the remaining $(|A|-2)/2$ Ice Walls and an Archer in the outermost position.  All assets, upgrades and ice are rezzed.

\begin{figure}
\vspace{-.5in}
\centerline{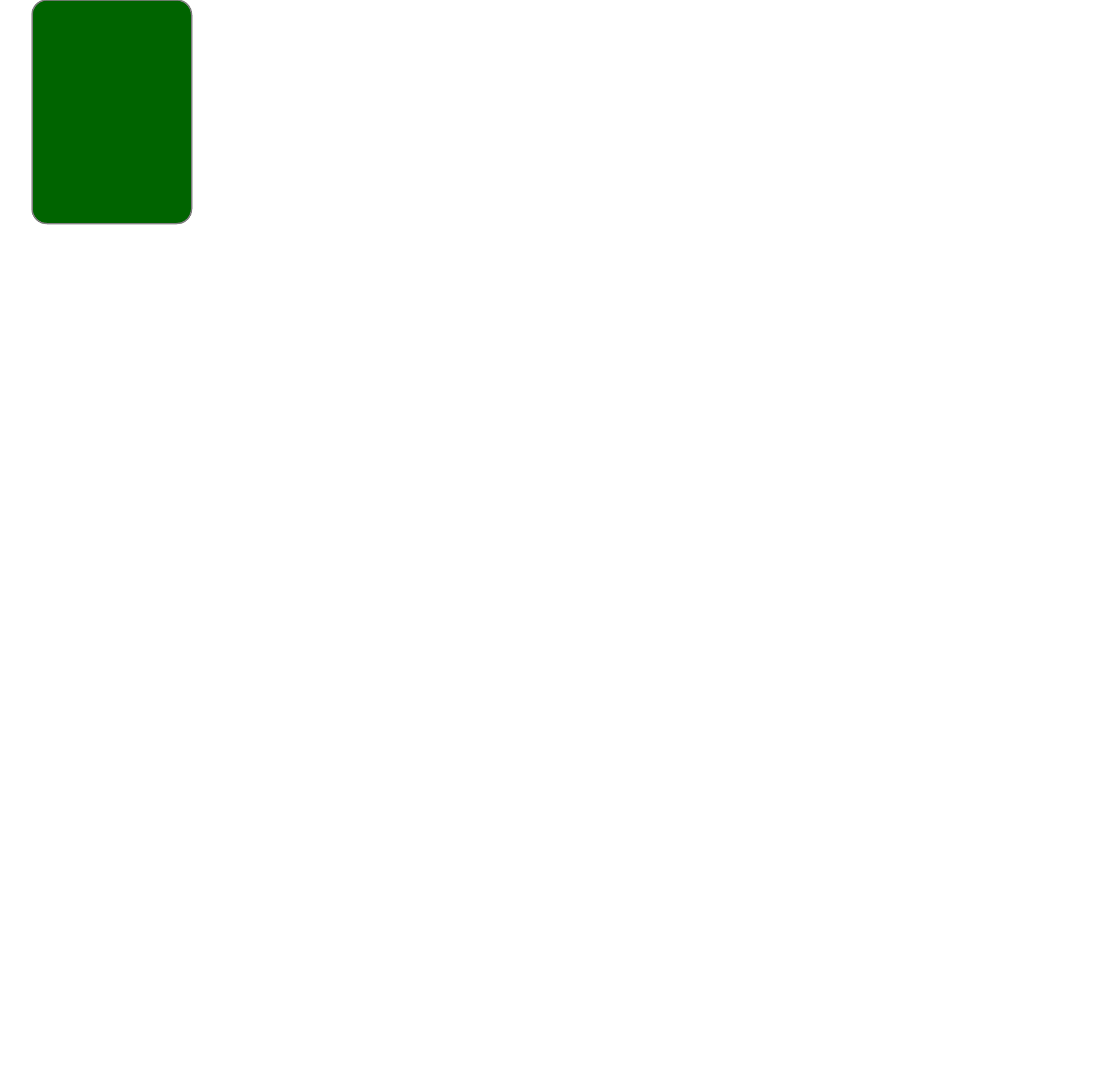}
\caption{The Corp's board state produced by the reduction.  To save space, Archives (no ice) and the Priority Requisition held in HQ are not shown.  The ice on R\&D and the Dedicated Response Team remote is more than twice as deep as on the other servers. 
}
\label{fig:netrunner-corp-board}
\thisfloatpagestyle{empty}
\end{figure}

\textbf{Runner board state:} The Runner's identity is Exile: Streethawk.  The Runner's rig consists of Aurora, Grappling Hook, and Pheromones with $4t + c + 2$ virus counters and credits on it, where $c$ is the total cost to break the Ice Walls initially protecting HQ with Aurora.  The Runner has Escher in their grip and $2t + 3$ credits in their credit pool.  The stack is empty.  The contents of the heap are not relevant to play on this turn because neither player has any card abilities that interact with the heap, but the heap contains one copy of Infiltration that was played last turn.  The Runner has a stolen Priority Requisition in their score area (worth 3 agenda points out of the 7 points required to win).

\textbf{Line of play:} To win this turn, the Runner must steal at least 4 points of agendas.  The Runner does not know the top card of R\&D, but the Runner cannot break the Enigmas protecting R\&D anyway.  There are $|A| + 3$ Enigmas, but only $|A| + 2$ non-Enigma pieces of ice that Enigmas could be exchanged with, so by the pigeonhole principle there will always be an Enigma protecting R\&D even after the Runner plays Escher.  All cards in Archives are faceup, so the Runner can see there are no agendas in Archives.  Thus, to win this turn, the Runner must steal the Priority Requisitions in the remote (revealed last turn by Infiltration) and in HQ (revealed last turn by Fast Track).

To steal the Priority Requisition from the remote, the Runner must take a tag from K.~P.~Lynn.  If the Runner doesn't immediately win by stealing the Priority Requisition, Dedicated Response Team will flatline them when the run ends because they are tagged.  By the same pigeonhole argument that makes R\&D inaccessible, the Runner cannot trash Dedicated Response Team, so the Runner must run the Priority Requisition remote after stealing the other Priority Requisition from HQ.

The Runner cannot break Archer's subroutines except once via Grappling Hook, so the Runner can only pass Archer once.  Both agendas are in servers protected by an Archer, so the Runner must play Escher to move the Archers.  The Runner cannot break Enigma, so the Runner must swap the Archers with the two Ice Walls on R\&D and the other remote.  The Runner cannot afford to pay the additional click cost imposed by the Strongboxes, because one click spent playing Escher, one click spent to make the run, and two extra clicks leaves the Runner with no clicks remaining to run the Priority Requisition remote.  Thus the Runner must run HQ at least twice after Escher, once to trash the Strongboxes and again to steal the agenda, then run the Priority Requsition remote.  The Runner has no clicks left over to click for credits.

The Runner spends 3 credits from their credit pool to play Escher, trashes Grappling Hook to pass Archer (allowing the Corp to gain two credits from Archer's first subroutine), then spends $c$ credits from Pheromones to break the other ice on HQ.  (Credits in the Runner's credit pool can be spent at any time, while credits on Pheromones can only be spent during runs on HQ, so it's to the Runner's advantage to spend credits from Pheromones when possible.)  When the Escher run ends, the Runner has three clicks remaining, $2t$ credits in their credit pool and $4t + 2$ credits on Pheromones\footnote{Pheromones gains a virus counter for each successful run on HQ, but recurring credits are only refreshed when a card is installed or rezzed (enters play) or at the beginning of its owner's turn, so the Runner does not get any additional credits to spend this turn.}.

During rearrangement, the Runner must arrange the Ice Walls such that the ice on HQ and the Priority Requisition remote each costs $2t$ credits to break.  If the Runner places more than $2t$ total cost to break on HQ, the Runner has too few credits on Pheromones to break all the subroutines on both runs and trash the Strongboxes.  The Runner cannot make up the deficit from their credit pool because they must run HQ twice, leaving them with too few credits to successfully run the Priority Requisition remote.  If the Runner places ice with total cost to break greater than $2t$ on the Priority Requisition remote, the Runner has too few credits in their credit pool to make a successful run.  Thus the Runner is forced to put $2t$ total cost to break on each server with an agenda in it.  This rearrangement of the ice represents a solution to the \textsc{2-Partition} instance, and the Runner can win this turn if and only if such a solution exists.
\end{proof}

\begin{theorem}
Deciding whether the Corp can win within two turns (mate-in-2) in card-copy-generalized Netrunner is weakly \ccNP-hard.
\end{theorem}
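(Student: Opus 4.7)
My plan is to mirror the previous theorem by again reducing from \textsc{2-Partition}. Given a multiset $A=\{a_1,\ldots,a_n\}$ with target sum $t$, I would construct a Netrunner position in which the Corp must perform $n$ scoring or damage-dealing actions whose individual costs are $a_1,\ldots,a_n$ credits, using exactly $t$ credits of some non-accumulating per-turn resource. Because the per-turn budget does not carry over, completing all $n$ actions within two Corp turns forces the Corp to partition them into two subsets each summing to exactly $t$. Natural candidates for this mechanism include an agenda with cost-$a_i$ advancement paid for by a recurring-credit pool that refreshes at turn start (so unused credits are simply wasted), or a stack of \emph{Punitive Counterstrike}-style operations whose lethal output depends on credits spent on a single turn rather than spread across two.

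The construction has three main components. First, a \emph{budget gadget}: a source that grants the Corp exactly $t$ spendable credits per turn with no way to save them, so that any turn in which the Corp spends less than $t$ strictly wastes resources. Second, an \emph{action gadget} for each $a_i$: a card that can only be ``completed'' by spending exactly $a_i$ of the turn's budget, and that contributes an indivisible unit (one agenda point, or one point of lethal damage) toward the Corp's win condition, calibrated so that the Corp wins within two turns if and only if every action is completed. Third, a \emph{Runner lock-out}, analogous to the Enigma/pigeonhole barrier in the mate-in-1 proof: the Runner's rig, grip, stack, and credit pool should be set so that during the intervening Runner turn the Runner can neither steal an unscored agenda, nor trash any reduction-critical asset, nor accelerate to $7$ agenda points on their own, regardless of how they play.

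The order of the argument would then be: (i) describe both players' board states explicitly; (ii) show that the Runner's intervening turn cannot affect the Corp's plan or win first, reducing the problem to a purely Corp-side resource-scheduling question across two turns; (iii) show that if a valid 2-partition of $A$ exists, the Corp uses it as a schedule and wins on turn 2; (iv) show that if the Corp wins within two turns, the Corp's credit expenditures on the two turns induce such a partition, because any deviation either overspends the per-turn budget or leaves at least one action incomplete.

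The main obstacle will be controlling the intervening Runner turn. In mate-in-1 the Runner's opponent never acts, but here the Runner gets three clicks, a draw, and potentially installs or operations with which to disrupt the Corp's careful per-turn accounting (by trashing a scoring asset, stealing a not-yet-scored agenda, or inflicting tags/damage on the Corp's infrastructure). I would neutralize this by choosing Runner identity, rig, and grip so that the Runner has no breaker matching some ice on each scoring server, too few credits to trace-through or pay through any server of interest, and no in-hand economy or trashing operations; a pigeonhole argument like the one using $|A|+3$ Enigmas in the mate-in-1 proof should suffice to make every critical server provably unreachable. Once the Runner is fully locked out, the remainder of the proof is a straightforward credit-counting argument that maps Corp schedules to partitions of $A$ and back.
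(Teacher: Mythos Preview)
Your plan diverges from the paper's proof in a way that creates a genuine obstacle. You propose that the Corp perform $n$ separate cost-$a_i$ actions across two turns, funded by a per-turn recurring pool of size $t$. But the Corp has only three clicks per turn, so over two turns it can take at most six discrete actions; any scheme in which each $a_i$ corresponds to its own click-consuming action (an advancement, an operation play, an install) cannot encode instances with $n>6$. You do not indicate any card effect that would let the Corp batch arbitrarily many of these actions into a single click, and the candidates you name (agenda advancement, \emph{Punitive Counterstrike}-style operations) are all one-click-per-use. Without a mechanism that collapses the $n$ choices into a constant number of clicks, the reduction does not go through.

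The paper sidesteps exactly this by inverting the roles you assign to the two players. Rather than locking the Runner out and making the Corp solve a scheduling problem, the paper has the Corp score \emph{Mandatory Seed Replacement}, whose on-score effect rearranges \emph{all} ice at once---so the entire partition choice is made inside a single triggered ability, costing no extra clicks regardless of $n$. The $a_i$ are encoded as strengths/subroutine counts on Walls of Static (via Sub Boost counters), and the intervening Runner turn is not an obstacle to be neutralized but the \emph{verification} step: the Runner is given just under $2t+|A|$ credits and an Aurora, so if either remote carries less than $2t+|A|$ cost-to-break the Runner steals a Medical Breakthrough and wins, while an exact split on both remotes leaves the Runner one credit short everywhere and lets the Corp score on turn two. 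Your ``Runner lock-out'' component is therefore pointed in the wrong direction: once the Runner truly cannot interact, the Corp's only remaining constraint is clicks, and clicks are too coarse to encode \textsc{2-Partition}. The key idea you are missing is a single Corp effect that commits to an entire partition in one shot, with the Runner's credit pool calibrated to punish any imbalance.
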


\begin{proof}
We again reduce from \textsc{2-Partition}.  Similarly to the Runner \mbox{mate-in-1} proof, this proof centers around an ice rearrangement effect, this time from Mandatory Seed Replacement, a Jinteki agenda.  Because agendas can only be used in-faction (in this case, with a Jinteki identity) and we are not generalizing influence, we cannot represent the $a_i$ with Ice Walls as in the previous proof.  Instead, we represent each $a_i$ as a Wall of Static with $a_i - 1$ Sub Boost counters, which costs Aurora $2a_i + 2$ credits to break (2 credits to boost Aurora's strength once, $2a_i$ credits to break the subroutines).  If the Corp places less than $2t + |A|$ cost to break on one of two remote servers, or tries to avoid simulating the \textsc{2-Partition} instance, the Runner can steal an agenda and win; otherwise, the Corp can finish advancing its agendas and win on its second turn.


\textbf{Corp board state:} The Corp's identity is Nisei Division: The Next Generation.  HQ contains a Medical Breakthrough.  R\&D contains two cards: a Medical Breakthrough (on top) and a Hedge Fund.  Archives does not contain any agendas (all cards are faceup).  R\&D, HQ, and Archives are all unprotected.  There are two remote servers.  One remote contains a Mandatory Seed Replacement with three advancement tokens, protected by half ($|A|/2$) of the Walls of Static representing $A$.  Nothing is installed in the other remote, which is protected by the other half of the Walls of Static.  The Corp has $4$ credits in their credit pool.  The Corp has a Priority Requisition in their score area (worth 3 agenda points out of the 7 points required to win).

\textbf{Runner board state:} The Runner's identity is Exile: Streethawk.  The Runner's rig consists of only Aurora.  The Runner's stack and grip are empty.  The contents of the heap are not relevant to play on this turn because the Runner has no card abilities that retrieve cards from their heap, but the heap contains one copy of The Shadow Net that was previously used to forfeit agendas.  The Runner has $2t + |A| - 4$ credits in their credit pool.  The Runner has a Priority Requisition and a Medical Breakthrough in their score area (worth 5 agenda points out of the 7 points required to win).

\textbf{Line of play:} The Corp's mandatory draw is the second Medical Breakthrough.  The Corp knows their decklist, so the Corp knows there are no more agendas in R\&D because, except for the Mandatory Seed Replacement currently installed, all other agendas have been scored, stolen, or forfeited to The Shadow Net.  Any winning line of play must include scoring one of the Medical Breakthroughs without allowing the Runner to steal one.  The Corp also knows the remaining card in R\&D is a Hedge Fund.

The Corp's first turn must be to advance and score the Mandatory Seed Replacement (bringing the Corp to 5 agenda points), then install both Medical Breakthroughs in the remote server.  If the Corp does not install both Medical Breakthroughs on its first turn, the Runner can win by stealing one from HQ (which is unprotected).  If the Corp installs a Medical Breakthrough in a new remote server, the Runner can win by running the new server, as the Corp has no ice to protect it with\footnote{While Mandatory Seed Replacement does not contain the sentence on Escher requiring that the same number of ice must protect each server after the rearrangement, the Unofficial FAQ (\url{https://netrunnerdb.com/en/set/fm/rulings}) clarifies it is still required.  Despite the name, the Unofficial FAQ has the approval of the lead designer, Michael Boggs, so it is effectively official.}.  If the Corp installs a Medical Breakthrough over the Mandatory Seed Replacement (trashing it), the Runner can win next turn by running Archives and stealing the Mandatory Seed Replacement.  If the Corp draws the Hedge Fund (to make HQ accesses probabilistic or to play it), they will lose when they cannot take their mandatory draw at the beginning of their next turn (and there is no way to win this turn).  Clicking for credits or playing the Hedge Fund are pointless as the Corp has nothing to spend the extra credits on.

If the Corp places less than $2t + |A|$ cost to break on either remote server when rearranging the ice, the Runner can win on their turn by spending three clicks to gain three credits (for a total of $2t + |A| - 1$ credits available), then stealing the Medical Breakthrough from that remote server.  If the Corp places exactly $2t + |A|$ cost to break on both remote servers (i.e., solves the \textsc{2-Partition} instance), the Runner does not have enough credits to successfully run either server.  Then on the Corp's second turn, they spend all three clicks and all $3$ credits remaining in their credit pool to advance one of the Medical Breakthroughs.  The Medical Breakthrough in the Runner's score area lowers the advancement requirement of the installed Medical Breakthrough to 3, so the Corp can score it, reaching the 7 points needed to win.  Thus the Corp can win within two turns if and only if the \textsc{2-Partition} instance has a solution.
\end{proof}

\section{Discussion and Open Problems}

Mandatory Seed Replacement will eventually rotate out of tournament play.  It is open whether the proof can be adapted to use cards only from the ``big box'' sets (which will never rotate).

In the Runner proof, Pheromones serves as a source of credits that can be spent only on one server.  Before the announcement of the Revised Core Set (without which Pheromones would have rotated), we developed an alternative proof based on Datasucker and Navi Mumbai City Grid.  Datasucker virus counters can be used in lieu of spending credits to boost icebreaker strength, but not on servers with a rezzed Navi Mumbai City Grid.  Our alternative proof used Corroder (removed from the Revised Core Set), so a virus counter directly offsets a credit that would have been spent to boost Corroder's strength; some care would be required to adapt for other icebreakers.  We mention this alternative construction because it may be useful in other proofs, or in case Fantasy Flight Games changes their rotation policy again.

The Corp proof is about mate-in-2, so it is natural to ask if mate-in-1 remains hard.  One proof strategy is to combine Mandatory Seed Replacement with An Offer You Can't Refuse.  If the Corp plays An Offer You Can't Refuse while already at 6 agenda points, the Runner is forced to run.  If the Corp solves the \textsc{2-Partition} instance when rearranging the ice, the Runner does not have enough credits to break all the ice and will be flatlined by a subroutine; otherwise, the Runner has enough credits to survive.  Unfortunately, the Corp chooses the server for An Offer You Can't Refuse, so they can ensure a win by placing the ice costing the most to break on that server.  Ice with the \textbf{deflector} subtype can redirect the Runner to another server, giving the Runner a choice of servers, but we do not know how to force the Corp to place the deflector ice in the outermost position when rearranging.

While it seems likely that the problems we consider are contained in NP, it is not obvious how to prove there is no combination of card effects allowing more than a polynomial number of actions within a turn.  One strategy for such a proof is to state a potential function whose parameters are the resources available to the players (clicks, credits in the credit pools, credits and counters on cards, cards remaining in the stack/R\&D, ...) and show every action decreases it, but this would require a case analysis of all cards in the game to ensure there are no opportunities for resource arbitrage.

The complexity of deciding the winner of a Netrunner game under perfect play remains open.


\section*{Acknowledgments}

We would like to thank the participants and organizers of the 30th Bellairs Winter Workshop on Computational Geometry.

\appendix
\newpage\section{Netrunner Card Details: Corp}\label{sec:netrunner-cards1}

\newcommand{\subroutine}{\includegraphics[height=\fontcharht\font`X]{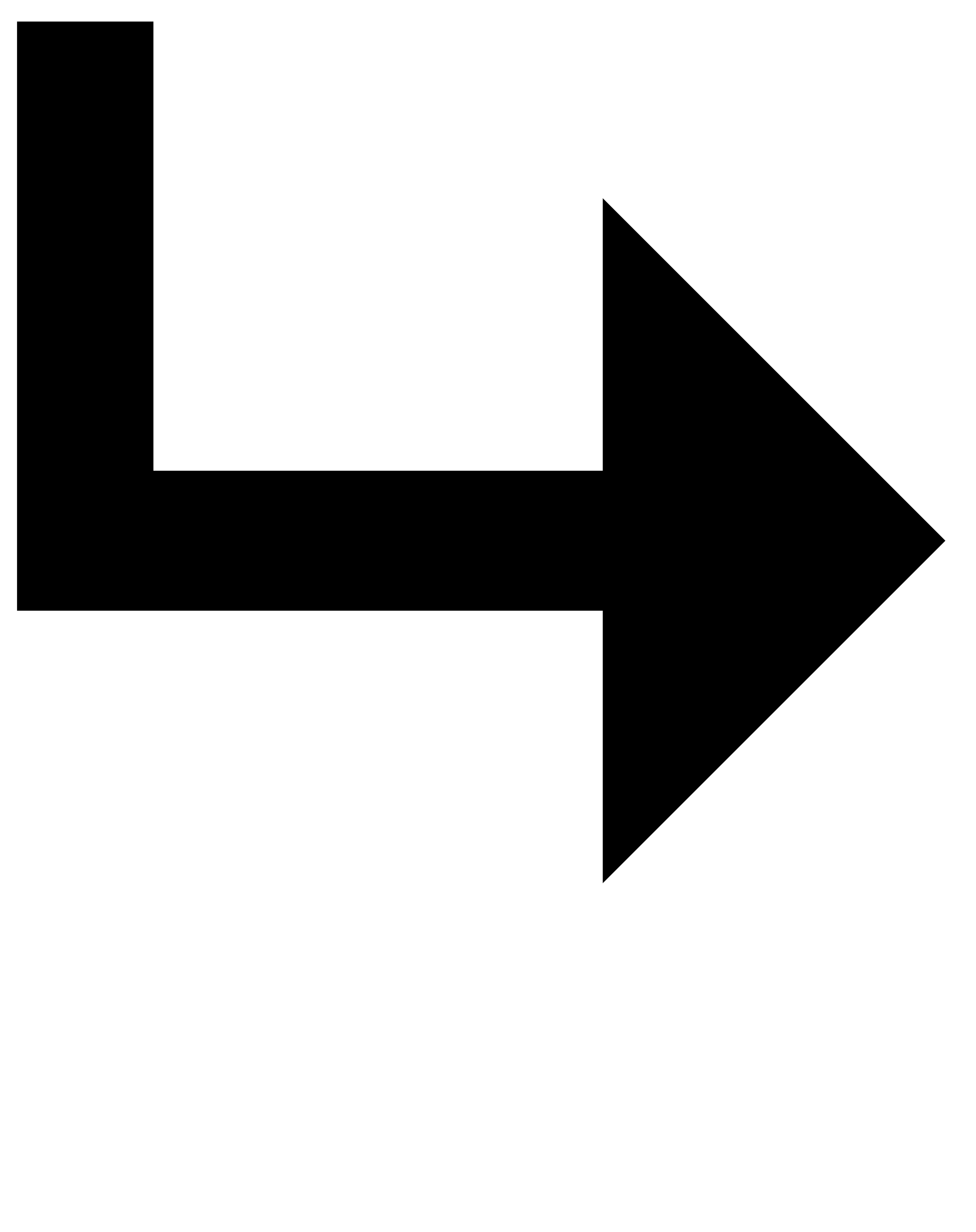}}
\newcommand{\credit}{\includegraphics[height=\fontcharht\font`X]{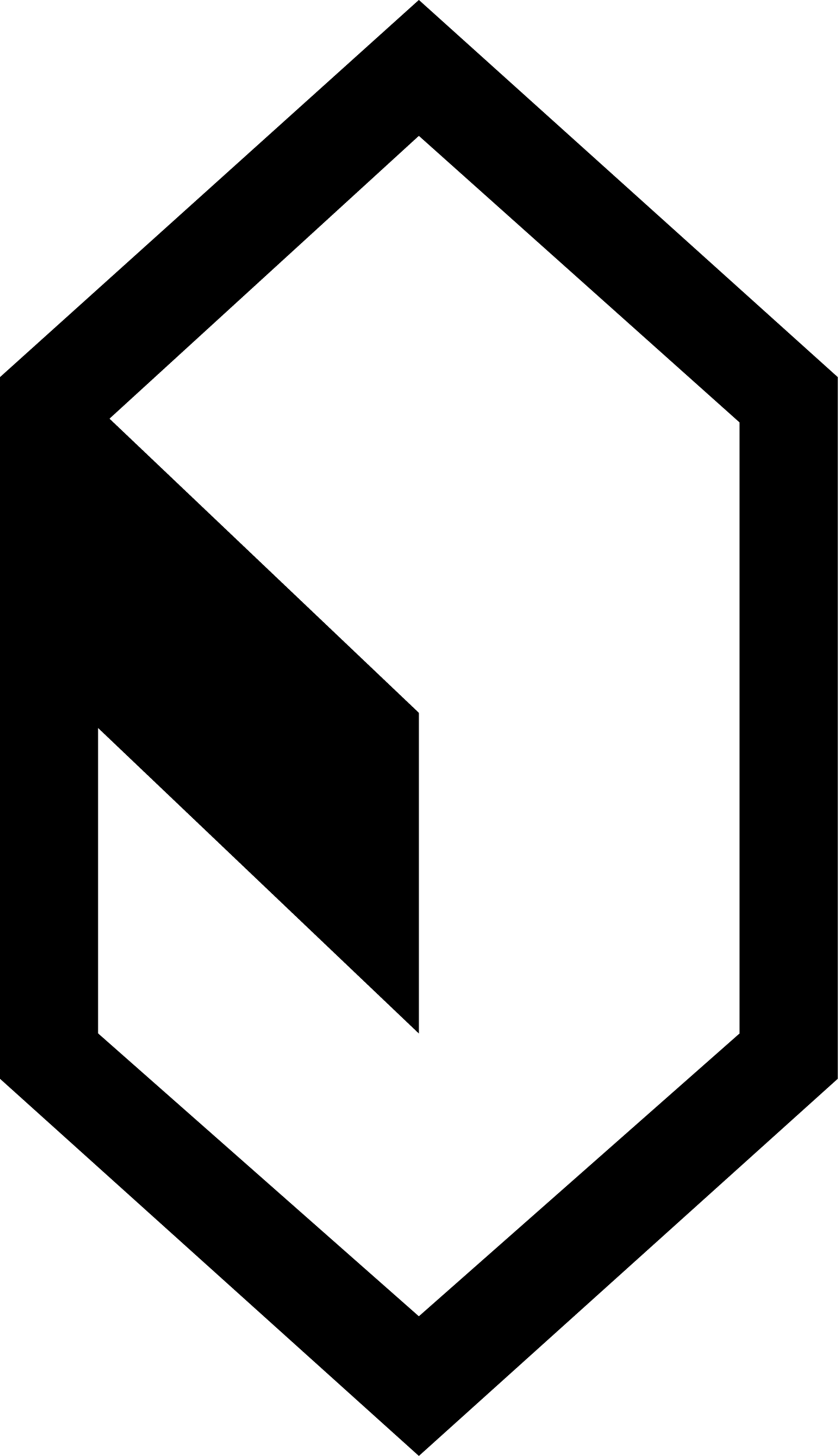}}
\newcommand{\recurcredit}{\includegraphics[height=\fontcharht\font`X]{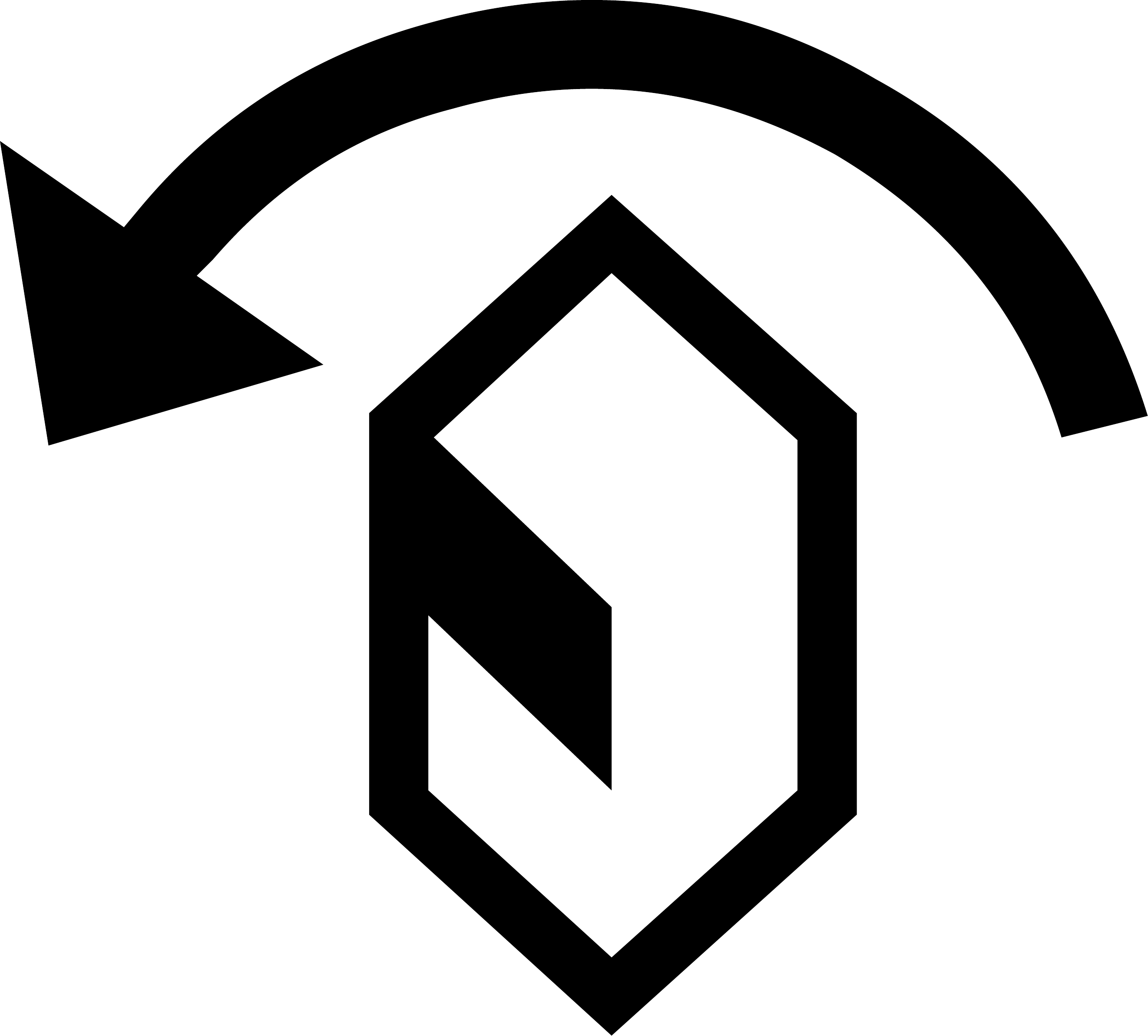}}
\newcommand{\click}{\includegraphics[height=\fontcharht\font`X]{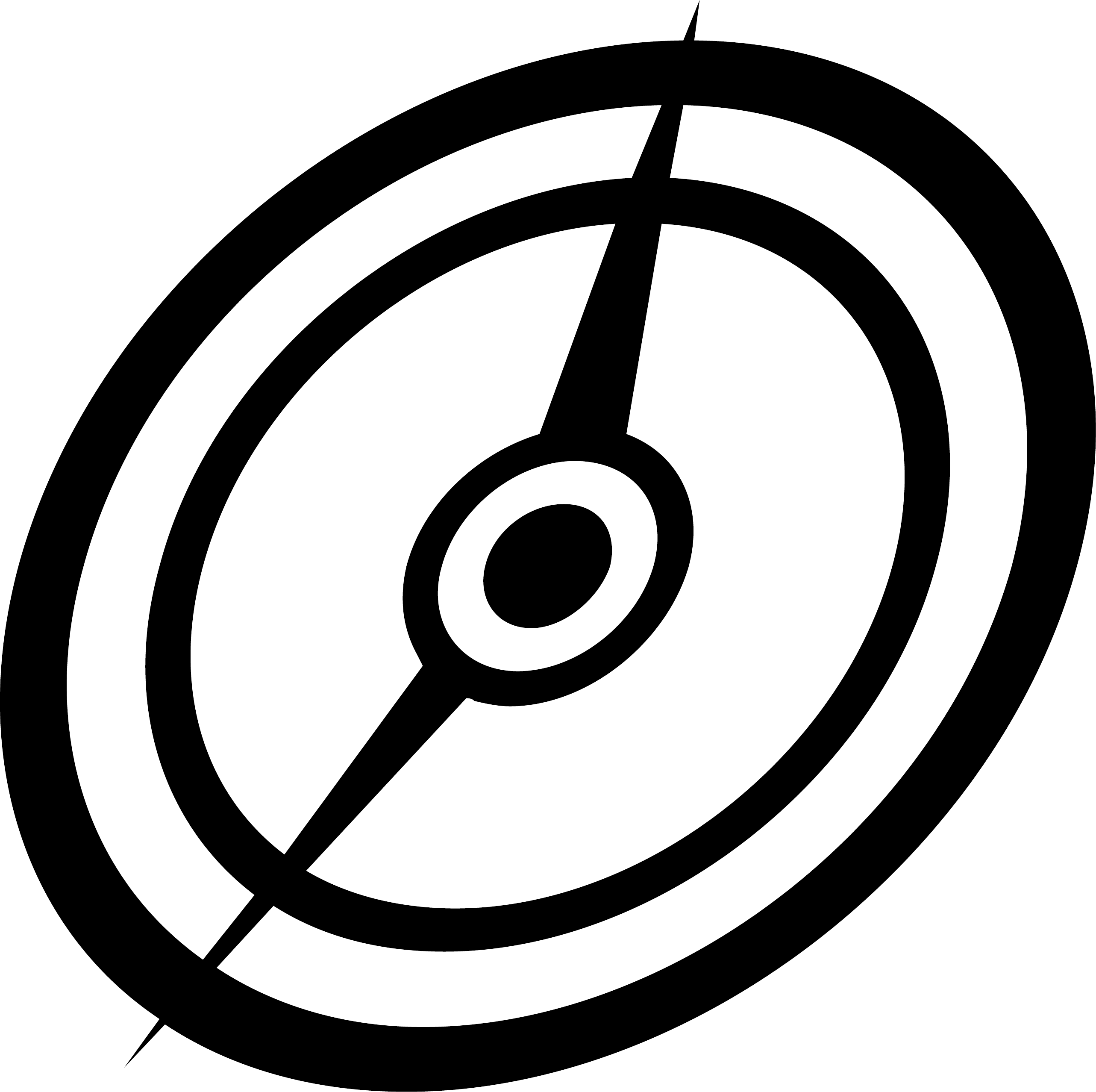}}
\newcommand{\trash}{\includegraphics[height=\fontcharht\font`X]{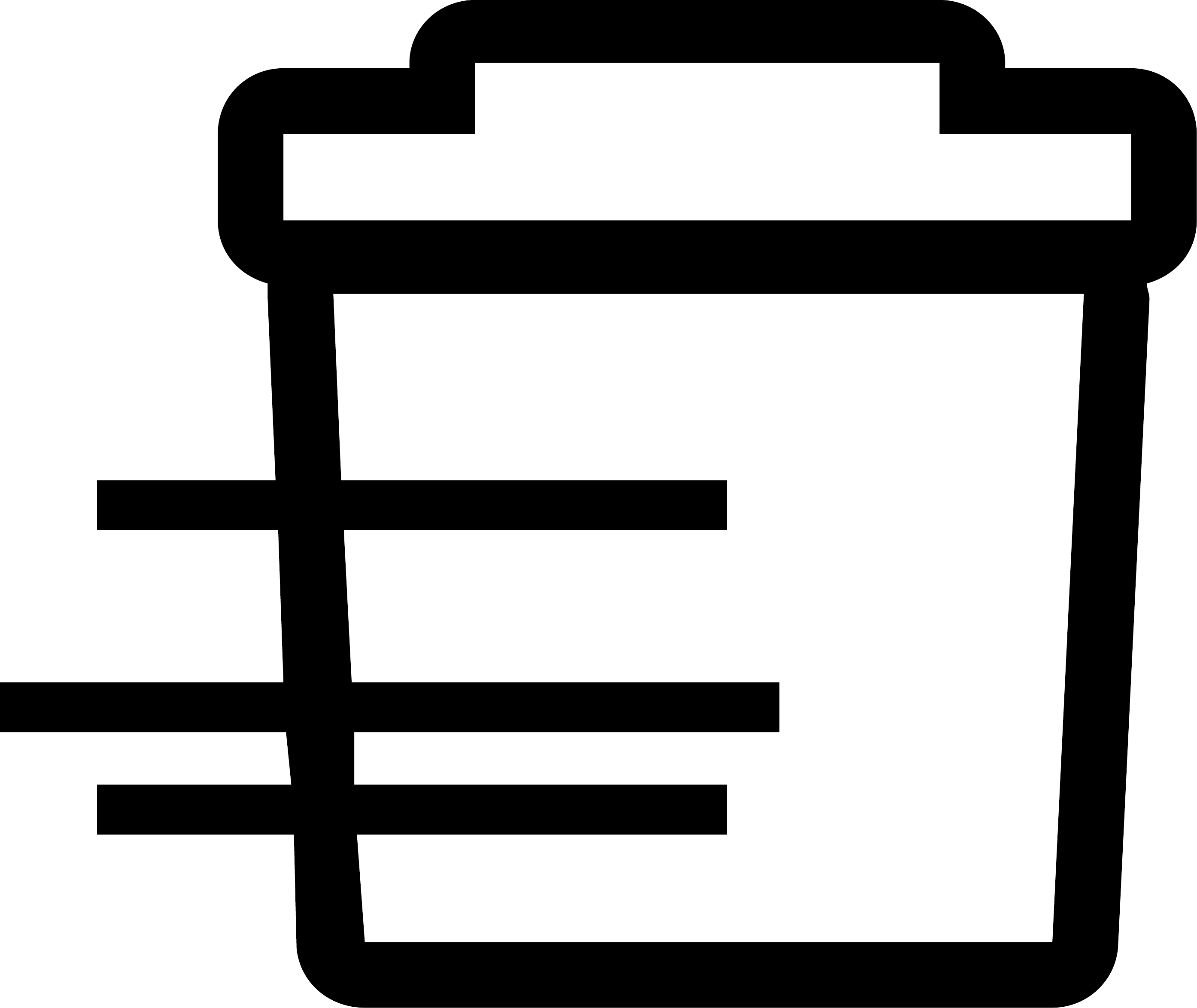}}

\colorlet{Weyland}{DarkGreen}
\colorlet{NBN}{DarkOrange}
\colorlet{Haas-Bioroid}{BlueViolet}
\colorlet{Jinteki}{Crimson}
\colorlet{Anarch}{OrangeRed}
\colorlet{Shaper}{LimeGreen}
\colorlet{Criminal}{RoyalBlue}
\definecolor{Adam}{RGB}{51,51,51}
\colorlet{Neutral}{Grey}

\newcommand{\cardbox}[9]{\begin{tcolorbox}[title={\href{https://netrunnerdb.com/en/card/#8}{\textsc{\large #1}}\\#2 -- #3},
  colframe=#6,
  coltitle=black,
  colbacktitle=#6!30, colback=#6!10,
  halign=flush left, halign title=flush left, halign lower=flush right,
  lower separated=false,
  top=1mm, middle=0mm
]#4\tcblower\textcolor{#6}{\myrepeat{#5}{$\bullet$}} #6 -- #7\end{tcolorbox}}


\begin{tcbraster}[raster columns=2,raster valign=top,size=small]

\cardbox{Archer}{ICE: Sentry - Destroyer}{Rez:~4, Str:~6}{%
As an additional cost to rez Archer, the Corp must forfeit an agenda.
\smallskip

\subroutine{} The Corp gains 2\credit{}.
\smallskip

\subroutine{} Trash 1 program.
\smallskip

\subroutine{} Trash 1 program.
\smallskip

\subroutine{} End the run.
}{2}{Weyland}{Revised Core Set}{20084}

\cardbox{Dedicated Response Team}{Asset: Hostile}{Rez:~2, Trash:~3}{%
If the Runner is tagged, Dedicated Response Team gains "Whenever a successful run ends, do 2 meat damage."
}{3}{Weyland}{Revised Core Set}{20081}

\cardbox{Enigma}{ICE: Code Gate}{Rez:~3, Str:~2}{%
\subroutine{} The Runner loses \click{}, if able.
\smallskip

\subroutine{} End the run.
}{0}{Neutral}{Revised Core Set}{20129}

\cardbox{Fast Track}{Operation}{Cost:~0}{%
Search R\&D for an agenda, reveal it, and add it to HQ. Shuffle R\&D.
}{0}{Neutral}{Honor and Profit}{05027}

\cardbox{Hedge Fund}{Operation: Transaction}{Cost:~5}{%
Gain 9\credit{}.
}{0}{Neutral}{Revised Core Set}{20132}

\cardbox{Ice Wall}{ICE: Barrier}{Rez:~1, Str:~1}{%
Ice Wall can be advanced and has +1 strength for each advancement token on it.
\smallskip

\subroutine{} End the run.
}{1}{Weyland}{Revised Core Set}{20088}

\cardbox{$\blackdiamond$K. P. Lynn}{Upgrade: Executive}{Rez:~1, Trash:~3}{%
Whenever the Runner passes all of the ice protecting this server, he or she must either take 1 tag or end the run.
}{2}{Weyland}{Terminal Directive}{13052}

\cardbox{Mandatory Seed Replacement}{Agenda: Security}{4/2}{%
When you score Mandatory Seed Replacement, rearrange any number of ice protecting all servers.
}{0}{Jinteki}{Free Mars}{12092}

\cardbox{Medical Breakthrough}{Agenda: Research}{4/2}{%
Lower the advancement requirement of each Medical Breakthrough by 1. This ability is active even while Medical Breakthrough is in the Runner's score area.
}{0}{Jinteki}{Honor and Profit}{05005}

\cardbox{Nisei Division: The Next Generation}{Identity: Division}{Deck:~45, Influence:~15}{%
Whenever you and the Runner reveal secretly spent credits, gain 1\credit{}.
}{0}{Jinteki}{Honor and Profit}{05002}

\cardbox{Priority Requisition}{Agenda: Security}{5/3}{%
When you score Priority Requisition, you may rez a piece of ice ignoring all costs.
}{0}{Neutral}{Revised Core Set}{20125}

\cardbox{Strongbox}{Upgrade}{Rez:~3, Trash:~1}{%
Each time the Runner accesses an agenda from this server, he or she must spend \click{} as an additional cost in order to steal it. This applies even during the run on which the Runner trashes Strongbox.
}{2}{Haas-Bioroid}{Revised Core Set}{20076}

\cardbox{Sub Boost}{Operation: Condition}{Cost:~0}{%
Install Sub Boost on a rezzed piece of ice as a hosted condition counter with the text "Host ice gains \textbf{barrier} and '\subroutine{} End the run.' after all its other subroutines."
}{0}{Neutral}{Order and Chaos}{07025}

\cardbox{Wall of Static}{ICE: Barrier}{Rez:~3, Str:~3}{%
\subroutine{} End the run.
}{0}{Neutral}{Revised Core Set}{20131}

\cardbox{Weyland Consortium: Building a Better World}{Identity: Megacorp}{Deck:~45, Influence:~15}{%
Gain 1\credit{} whenever you play a \textbf{transaction} operation.
}{0}{Weyland}{Revised Core Set}{20077}

\end{tcbraster}

\newpage\section{Netrunner Card Details: Runner}\label{sec:netrunner-cards2}
\vspace{-1em}
\begin{tcbraster}[raster columns=2,raster valign=top,size=small]

\cardbox{Aurora}{Program: Icebreaker - Fracter}{Cost:~3, Mem:~1, Str:~1}{%
2\credit{}: Break \textbf{barrier} subroutine.
\smallskip

2\credit{}: +3 strength.
}{1}{Criminal}{Revised Core Set}{20027}

\cardbox{Escher}{Event: Run}{Cost:~3}{%
Make a run on HQ. If successful, instead of accessing cards, rearrange any number of ice protecting all servers (without rezzing or derezzing the ice). The same number of ice must be protecting each server after the rearrangement as before.
}{5}{Shaper}{Creation and Control}{03031}

\cardbox{Exile: Streethawk}{Identity: Natural}{Deck:~45, Influence:~15, Link:~1}{%
Whenever you install a program from your heap, draw 1 card.
}{0}{Shaper}{Creation and Control}{03030}

\cardbox{Grappling Hook}{Program}{Cost:~2, Mem:~1}{%
\trash{}: Break all but 1 subroutine on a piece of ice.
}{2}{Criminal}{Honor and Profit}{05045}

\cardbox{Infiltration}{Event}{Cost:~0}{%
Gain 2\credit{} or expose 1 card.
}{0}{Neutral}{Revised Core Set}{20055}

\cardbox{Pheromones}{Program: Virus}{Cost:~2, Mem:~1}{%
X\recurcredit{}
\smallskip

Use these credits during runs on HQ. X is the number of virus counters on Pheromones.
\smallskip

Whenever you make a successful run on HQ, place 1 virus counter on Pheromones.
}{2}{Criminal}{Revised Core Set}{20031}

\cardbox{$\blackdiamond$The Shadow Net}{Resource: Virtual}{Cost:~0}{%
\click{}\textbf{, forfeit an agenda:} Play an event from your heap, ignoring all costs.
}{0}{Neutral}{Terminal Directive}{13027}

\end{tcbraster}

\newpage
\bibliographystyle{plain}
\bibliography{cardgames}

\end{document}